\newcommand{\Gsp}{G_{sp}}
\newcommand{\coGsp}{\overline{G_{sp}}}
\newcommand{\st}{{st}}
\newcommand{\ExtG}{G^\star_{st}}
\newcommand{\ExtH}{H^\star_{st}}
\newcommand{\DirG}{\overrightarrow{G}_{\st}}
\newcommand{\DirE}{\overrightarrow{E}}
\newcommand{\DirV}{\overrightarrow{V}}
\newcommand{\pl}{p_{\lambda}}
\newcommand{\pu}{p_{\mu}}
\let\doendproof\endproof
\renewcommand{\endproof}{\hfill $\square$\doendproof}
\title{Inserting an Edge into a Geometric Embedding\thanks{Work
was partially supported by grant WA 654/21-1 of the German Research Foundation
(DFG).}}
\author{Marcel Radermacher\inst{1} \and Ignaz Rutter\inst{2}}
\institute{
		Department of Computer Science, Karlsruhe  Institute of Technology,
	Germany \and
	Department of Computer Science and Mathematics, University
	of Passau, Germany \\
	\email{radermacher@kit.edu}, \email{rutter@fim.uni-passau.de}}
\begin{document}

\maketitle

\begin{abstract}
	The algorithm to insert an edge $e$ in linear time into a planar graph $G$
	with a minimal number of crossings on
	$e$~\cite{DBLP:journals/algorithmica/GutwengerMW05}, is a helpful tool for
	designing heuristics that minimize edge crossings in drawings of general graphs.
	Unfortunately, some graphs do not have a geometric embedding $\Gamma$ such
	that $\Gamma+e$ has the same number of crossings as the embedding $G+e$.  This
	motivates the study of the computational complexity of the following problem:
	Given a combinatorially embedded graph $G$, compute a geometric embedding
	$\Gamma$ that has the same combinatorial embedding as $G$ and that minimizes
	the crossings of $\Gamma+e$.  We give polynomial-time algorithms for special
	cases and prove that the general problem is fixed-parameter tractable in the
	number of crossings. Moreover, we show how to approximate the number of
	crossings by a factor $(\Delta-2)$, where $\Delta$ is the maximum vertex
	degree of $G$. 
\end{abstract}

\section{Introduction}

Crossing minimization is an important task for the construction of readable
drawings.  The problem of minimizing the number of crossings in a given graph is
a well-known $\cNP$-complete problem~\cite{garey1983crossing}.  A very
successful heuristic for minimizing the number of crossings in a topological
drawing of a graph $G$ is to start with a spanning planar subgraph $H$ of $G$
and to iteratively \emph{insert} the remaining edges into a drawing of~$H$.  The
edge insertion problem for a planar graph $G$ and two vertices $s,t \in V(G)$
asks to find a drawing~$\Gamma+st$ of~$G+st$ with the minimum number of
crossings such that the induced drawing~$\Gamma$ of $G$ is planar.  The problem
comes with several variants depending on whether the drawing~$\Gamma$ can be
chosen arbitrarily or is
fixed~\cite{JGAA-160,DBLP:journals/algorithmica/GutwengerMW05}.  In the planar
topological case both problems can be solved in linear time.  More general
problems such as inserting several edges
simultaneously~\cite{chimani_et_al:LIPIcs:2016:5922} or inserting a vertex
together with all its incident edges~\cite{DBLP:conf/soda/ChimaniGMW09} have
also been studied.

All these approaches have in common that they focus on topological drawings
where edges are represented as arbitrary curves between their endpoints.  By
contrast, we focus on geometric embeddings, i.e., planar straight-line drawings,
and the corresponding rectilinear crossing number. In this scenario we are only
aware of a few heuristics that compute straight-line drawings of general
graphs~\cite{DBLP:journals/corr/abs-1201-3011,doi:10.1137/1.9781611975055.12}.
Clearly, if a geometric embedding~$\Gamma$ of the input graph $G$ is provided as
part of the input, there is no choice left; we can simply insert the
straight-line segment from $s$ to $t$ into the drawing and count the number of
crossings it produces.  If, however, only the combinatorial embedding is
specified, but one may still choose the outer face and choose the vertex
positions so that this results in a straight-line drawing with the given
combinatorial embedding, then the problem becomes interesting and non-trivial.
We call this problem \emph{geometric edge insertion}.

\subsubsection*{Contribution and Outline.}

We show several results on the complexity of geometric edge insertion with a
fixed combinatorial embedding.  Namely, we give a linear-time algorithm for the
case that the maximum degree~$\Delta$ of $G$ is at most~5
(Sec.~\ref{sec:bounded_degree}).  For the general case, we give a
$(\Delta-2)$-approximation that runs in linear time.  Moreover, we give an
efficient algorithm for testing in special cases whether there exists a way to
insert the edge $st$ so that it does not produce more crossings than when we
allow to draw it as an arbitrary curve (Sec.~\ref{sec:consistent_shortest}).
Finally, we give a randomized FPT algorithm that tests in~$O(4^k n)$ time
whether an edge can be inserted with at most $k$ crossings (Sec.~\ref{sec:fpt}).  

\section{Preliminaries}

\begin{figure}[tb]
	\centering
	\includegraphics{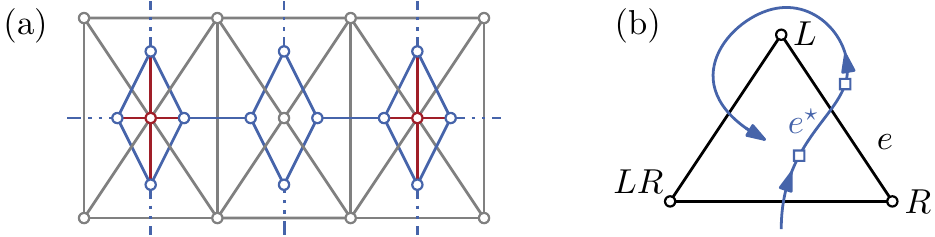}
	\caption{(a) The extended dual (red + blue) of the primal graph (grey) and the red
	vertices corresponding to $s$ and $t$. (b) Labeling induced by the blue path. 
}
	\label{fig:extended_dual}
\end{figure}

Let $G=(V, E)$ be a planar graph with a given combinatorial embedding where only
the  choice of the outer face is free. Additionally, let $s$ and $t$ be two
distinct vertices with $st \not\in E$.  Denote by $G+st$ the graph $G$ together
with the edge $st$. We want to insert the \emph{edge $st$ into the embedded
graph $G$}. That is, we seek a straight-line drawing $\Gamma$ of $G$ (with the
given embedding) such that $st$ can be inserted into $\Gamma$ with a minimum
number of crossings. In $\Gamma$, the edge $st$ starts at $s$, traverses a set
of faces and ends in $t$. Topologically, this corresponds to a path $p(\Gamma)$
from $s$ to $t$ in the \emph{extended dual $\ExtG$ of $G$}, i.e., in the dual
graph $G^\star$ plus $s$ and $t$ connected to all vertices of their
dual faces; see Fig.~\ref{fig:extended_dual}a.  The number of crossings in
$\Gamma+st$ corresponds to the length of the path minus two. However, not all
$st$-paths in $\ExtG$ are of the form $p(\Gamma)$ for a straight-line drawing
$\Gamma$ of $G$. 

\begin{figure}[tb]
	\centering
	\includegraphics{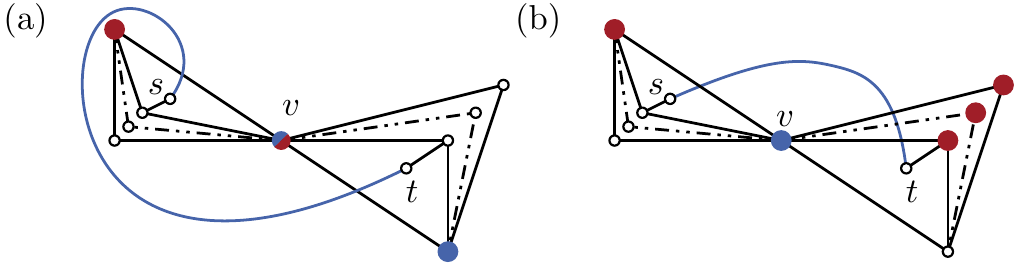}
	\caption{ Ratio between length of the shortest $st$ path and the length of a
	shortest consistent $st$-path.
	The solid black edges induce a graph of maximum degree $6$. Red vertices have
	label $L$, blue vertices have label $R$. (a) The shortest path from $s$ to $t$
	in $\ExtG$ is not consistent.
	}
	\label{fig:bad_ratio}
\end{figure}

A labeling of $G$ is a mapping $l\colon V \to \{L, R\}$ that labels
vertices as either left or right.  Consider an edge $uv$ of $G$ that
is crossed by a path $p$ such that $u$ and $v$ are to the left and to
the right of $p$, respectively.  The edge $uv$ is \emph{compatible}
with a labeling $l$ if $l(u)=L$ and $l(v) =R$.  A path $p$ of $\ExtG$
and a labeling $l$ of $G$ are \emph{compatible} if $l$ is compatible
with each edge that is crossed by $p$.  A path $p$ is
\emph{consistent} if there is a labeling of $G$ that is compatible
with $p$.  Eades et al.~\cite{10.1007/978-3-319-21840-3_25} show the
following result.

\begin{proposition}[Eades et al.~\cite{10.1007/978-3-319-21840-3_25}, Theorem~1]
	\label{prop:consistent_path}
	An $st$-path in $\ExtG$ is of the form $p(\Gamma)$ if and
	only if it is consistent, where $\Gamma$ is a geometric embedding of $G$. 
\end{proposition}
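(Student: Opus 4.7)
I would prove the two directions of the biconditional separately.

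For the forward direction (only if), suppose $\Gamma$ is a geometric embedding of $G$ and $p=p(\Gamma)$. After a rotation and an infinitesimal perturbation so that no vertex of $G$ lies on the line through $s$ and $t$, the open segment $st$ divides the plane into two open half-planes. I would define $l(v)=L$ if $v$ lies in one half-plane and $l(v)=R$ if $v$ lies in the other. Any edge $uv$ of $G$ whose straight-line segment is crossed by $st$ must have its endpoints on opposite sides of the line through $s,t$, so $l(u)\neq l(v)$. Fixing the naming convention (so that ``$u$ left, $v$ right'' agrees with the definition of compatibility in the text) shows that $l$ is compatible with every edge crossed by $p$, hence $p$ is consistent.

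For the backward direction (if), suppose $p$ is consistent with a labeling $l$. Let $f_0,\dots,f_k$ be the faces traversed by $p$ (with $s\in f_0$, $t\in f_k$) and let $e_1,\dots,e_{k-1}$ be the edges of $G$ crossed by $p$ between consecutive faces. The plan is to augment $G$ into a planar graph $G'$ by (i) subdividing each $e_i$ with a new dummy vertex $w_i$, and (ii) adding the edges of the $st$-path $P=s,w_1,w_2,\dots,w_{k-1},t$ routed through $f_0,\dots,f_k$ respecting the combinatorial embedding. Compatibility of $l$ with every $e_i$ ensures that each face $f_i$ is split by $P$ into a ``left half'' bordered only by $L$-vertices (and vertices of $P$) and a ``right half'' bordered only by $R$-vertices.

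I would then cut $G'$ along $P$ into two planar graphs $G'_L$ and $G'_R$ that share $P$ as part of their outer boundary. By Tutte's straight-line convex drawing theorem (possibly after augmenting each side with a few additional edges to reach the required connectivity, and choosing the cycle through $P$ plus the rest of the outer face as a strictly convex outer polygon), both $G'_L$ and $G'_R$ admit planar straight-line drawings in which $P$ is drawn as a fixed horizontal segment from $s$ to $t$ with the $w_i$'s as interior points in the prescribed order. Placing $G'_L$ above this segment and $G'_R$ below, then suppressing the subdivision vertices $w_i$, yields a straight-line drawing $\Gamma$ of $G$ with the correct combinatorial embedding in which the segment $st$ crosses precisely the edges $e_1,\dots,e_{k-1}$ in the prescribed order, so $p(\Gamma)=p$.

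The main obstacle is the constrained straight-line realization step: Tutte's theorem requires a convex outer boundary and sufficient connectivity, neither of which is guaranteed on $G'_L$ and $G'_R$ out of the box. I would handle this by slightly enlarging the outer face of each side (adding auxiliary edges around $P$ to turn its cut side into a strictly convex chain together with the rest of the outer boundary) and invoke an extension of Tutte's theorem such as the Chiba--Nishizeki convex drawing theorem or the Hong--Nagamochi result on drawings with prescribed convex boundary. Degenerate situations---for example a vertex of $G$ incident to both $f_i$'s adjacent to an $e_i$, or $s$ and $t$ sharing a face---require extra care when defining the cut, but can be handled by placing the dummy vertices in general position.
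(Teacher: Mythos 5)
The paper does not prove this proposition at all; it is imported verbatim from Eades et al.\ (their Theorem~1), so there is no in-paper argument to compare yours against. Judged on its own, your forward direction is fine: after a generic perturbation, every edge crossed by the open segment $st$ has its endpoints strictly on opposite sides of the supporting line, and labelling the two half-planes $L$ and $R$ gives a compatible labeling. The backward direction, however, has a genuine gap.

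The problem is the step ``cut $G'$ along $P$ into two planar graphs $G'_L$ and $G'_R$.'' An open $st$-path does not separate the plane (or sphere), so ``above $P$'' and ``below $P$'' is only defined locally, inside the faces $f_0,\dots,f_k$ actually traversed by $p$. A vertex lying in a face not traversed by $p$ --- in particular every vertex with label $\bot$, and even $L$/$R$-labelled vertices whose incident crossed edge is far from where they sit --- can be connected to both sides of $P$ by walking around $s$ or around $t$, so your decomposition is not well defined. To make it well defined you must first extend $p$ to a closed curve that crosses every edge of $G$ at most once and separates all $L$-vertices from all $R$-vertices; this is exactly the ``second edge-disjoint non-crossing $st$-path'' of Lemma~\ref{lemma:edge_disjoint}, whose existence the paper deduces \emph{from} Proposition~\ref{prop:consistent_path} (via the line $g\supseteq st$), so you cannot borrow it here without circularity. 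Establishing that closing curve combinatorially from the mere existence of a compatible labeling is the real content of the ``if'' direction and is missing from your argument. Secondarily, even granting the cut, the realization step --- drawing each side with the subdivided path $P$ as a straight segment of its outer boundary while preserving the prescribed embedding of $G$ --- is the other hard part: Tutte/Chiba--Nishizeki/Hong--Nagamochi all require both connectivity conditions and a convex (here: partly collinear) outer polygon compatible with the graph, and the augmentation you sketch must be shown not to destroy planarity, the embedding, or the collinearity of $P$. You correctly flag this obstacle, but as written it is deferred rather than resolved, so the proof is incomplete on both counts.
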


In order to minimize the number of crossings of $\Gamma+st$, we look for a
consistent $st$-path of minimum length in $\ExtG$.  Given a path $p$, it is easy
to check whether $p$ is consistent. Fig.~\ref{fig:bad_ratio} shows that the
ratio between the length of a shortest $st$-path and the length of a shortest
consistent $st$-path can be arbitrarily large. Thus, our goal is to find short
consistent $st$-paths.

Let $H=(V', E')$ be a directed acyclic graph.  A path $p = \langle v_1, v_2,
\dots, v_k\rangle$ is a \emph{directed path} if for each $1 \leq i < k$,
$v_iv_{i+1} \in E'$. It is \emph{undirected} if for each $ 1 \leq i < k$, either
$v_iv_{i+1} \in E'$ or $v_{i+1}v_i \in E'$. We refer to the number $|p|$ of edges of
a path as the \emph{length of $p$}.  Two paths $p$ and $p'$ are
\emph{edge-disjoint} if they do not share an edge. Two paths $p$ and $p'$ of an
embedded graph are \emph{non-crossing} if at each common vertex $v$, the edges
of $p$ and $p'$ incident to $v$ do not alternate in the cyclic order around
$v$ in the graph induced by $p$ and $p'$.  We denote by $p[u, v]$ the subpath of
a path $p$ from $u$ to $v$.

\section{Bounded Degree}
\label{sec:bounded_degree}

The shortest $st$-path of the graph in Fig.~\ref{fig:bad_ratio}a is not
consistent. Note that the maximum vertex degree is $6$.  In this section, we
show that every shortest $st$-path in graphs of bounded degree $3$ is
consistent, and that in each planar graph with vertex degree at most $5$,  there
is a shortest $st$-path that is consistent.  Finally, we prove that there is a
consistent $st$-path of length $(\Delta - 2) l $ in a graph with maximum vertex
degree~$\Delta$ and a shortest $st$-path of length $l$ in $\ExtG$.

Let $p$ be an $st$-path in $\ExtG$ and let $e^\star$ be an edge of $p$. An
endpoint $u$ of the primal edge $e$ of $e^\star$ is \emph{left of $e^\star$} if
it is locally left of $p$ on $e$ (Fig.~\ref{fig:extended_dual}b).  A vertex $v$
of $G$ is \emph{left} (\emph{right}) of $p$ if $v$ is left (right) of an edge of
$p$.  We now consider a labeling extended by two more labels $LR, \bot$. We
define the labeling~$l_p$ induced by $p$ as follows. Each vertex that is left
and right of $p$ gets the label $LR$. The remaining vertices that are either
left or right of $p$ get labels $L$ and $R$, respectively.  Vertices neither left nor right of
$p$ get the label $\bot$.  Obviously, there is a labeling $l$ of $G$ compatible
with $p$ if and only if $l_p$ does not use the label $LR$.

\begin{restatable}{theorem}{thmdegthree}
	\label{theorem:degree_3}
	Let $G$ be a planar embedded graph of degree at most $3$. Then every
	shortest $st$-path in $\ExtG$ is consistent.
\end{restatable}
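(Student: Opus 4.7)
The plan is to show that for every vertex $v$ of $G$ the induced label $l_p(v)$ is never $LR$; by the discussion preceding the theorem this means $p$ is consistent. I would argue this vertex by vertex, exploiting that shortest paths in $\ExtG$ are simple, that every internal vertex of $p$ therefore uses exactly two incident edges of $p$, and that $s,t$ are primal vertices of $G$ and hence distinct from every dual face.

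Fix a primal vertex $v$ with $\deg(v)=d\leq 3$ and let $e_1,\dots,e_d$ be its incident edges with corresponding dual edges $e_1^\star,\dots,e_d^\star$. These dual edges form a cycle in the dual graph on the $d$ faces surrounding $v$; for $d=3$ this is a triangle on the three faces $f_{12},f_{23},f_{31}$, where $f_{ij}$ denotes the face locally bounded by $e_i$ and $e_j$ at the corner $v$. First I would show that $p$ uses at most two of the $e_k^\star$. For $d\leq 2$ this is immediate. For $d=3$, if all three triangle edges were in $p$, then each $f_{ij}$ (being an internal vertex of $p$) would have both of its $p$-edges among the triangle edges, so $p$ would contain the entire $3$-cycle on $\{f_{12},f_{23},f_{31}\}$ as a subgraph, contradicting simplicity.

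If $p$ uses zero or one of the $e_k^\star$, then $v$ receives label $\bot$, $L$, or $R$. If $p$ uses exactly two, say $e_i^\star$ and $e_j^\star$, these share the face $f_{ij}$, and by simplicity the two crossings are consecutive along $p$: the path $p$ enters $f_{ij}$ via $e_i$ and exits via $e_j$ (or vice versa). The remaining claim, and the main obstacle, is that $v$ lies on the same side of $p$ at both crossings. I would establish this by a local coordinate argument in a small disc around $v$: place $e_i$ and $e_j$ as two rays at $v$ bounding the wedge $f_{ij}$; at the entry crossing on $e_i$ the tangent of $p$ must have a component pointing into this wedge, which forces $v$ (which lies along $e_i$ away from the crossing point) into one specific half-plane of $p$'s tangent; the symmetric computation at the exit crossing on $e_j$, together with the fact that the wedge of $f_{ij}$ at $v$ is bounded precisely by $e_i$ and $e_j$, places $v$ in the same half-plane. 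Combining the cases, no vertex of $G$ ever receives label $LR$, and hence $p$ is consistent.
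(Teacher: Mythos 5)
Your proof is correct and rests on the same two ingredients as the paper's: the dual edges around a degree-$\le 3$ vertex $v$ form a triangle on the faces at $v$, and a shortest (hence simple) $st$-path can use at most two of them, which must then be consecutive at the shared face so that $v$ stays on one side. The paper phrases this as a contradiction (an $LR$-labeled vertex would force $p$ to visit a face twice) while you argue the contrapositive directly, but the mathematical content is essentially identical.
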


\begin{theorem}
	\label{theorem:degree_5}
	Let $G$ be a planar embedded graph with maximum degree $5$. Then there is a
	shortest $st$-path in $\ExtG$ that is consistent.
\end{theorem}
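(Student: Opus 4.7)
The plan is to start from an arbitrary shortest $st$-path $p$ in $\ExtG$ and, while $p$ is inconsistent, perform a local rerouting that strictly reduces the number of primal vertices labeled $LR$ under $l_p$ without increasing $|p|$. Because the $LR$-count is a non-negative integer, the procedure terminates, and the resulting path is a shortest $st$-path that is consistent.

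Suppose $p$ is not consistent, so there is a primal vertex $v$ with $l_p(v) = LR$; by assumption $\deg(v) \leq 5$. By definition of $LR$, at least two dual edges $e_1^\star, e_2^\star$ of $p$ correspond to primal edges incident to $v$ with $v$ on opposite sides of $p$. I would choose such a pair so that, among all crossings of $p$ at edges incident to $v$, $e_1^\star$ and $e_2^\star$ are consecutive along $p$ and lie on opposite sides of $v$. Let $q$ denote the subpath of $p$ strictly between $e_1^\star$ and $e_2^\star$, so the contribution of this stretch of $p$ has length $|q| + 2$.

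The central step is to replace $e_1^\star \cdot q \cdot e_2^\star$ by a detour $\pi'$ that walks only through faces incident to $v$, on one of the two arcs into which $e_1,e_2$ split the edges around $v$ in the rotation system. The two arcs contain $k_1$ and $k_2$ further incident edges with $k_1 + k_2 = \deg(v) - 2 \leq 3$, and a detour through the arc of size $k_i$ uses $k_i + 2$ dual edges (the two boundary crossings $e_1^\star, e_2^\star$ together with $k_i$ further crossings of edges around $v$). Choosing the shorter arc yields $|\pi'| \leq \lfloor (\deg(v)-2)/2 \rfloor + 2 \leq 3$. I would then verify $|\pi'| \leq |q|+2$: the choice of $e_1^\star, e_2^\star$ as consecutive crossings of opposite sides forces $|q| \geq 1$ (the case $|q|=0$ would make $e_1^\star, e_2^\star$ share a face at $v$, a configuration which a direct check shows is impossible if the two crossings put $v$ on opposite sides). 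After the swap, the new path passes $v$ on one side only, so $v$ is no longer labeled $LR$.

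The main obstacle will be verifying that the rerouting does not introduce new $LR$-vertices. Since $\pi'$ only crosses primal edges incident to $v$, only the labels of neighbors of $v$ can change, and I would argue by a local orientation analysis that each such neighbor either retains its previous side label or acquires a first, unambiguous one. Termination then follows from the monovariant (number of $LR$-vertices, $|p|$) in lexicographic order. The degree bound $\Delta \leq 5$ enters precisely at the inequality $\min(k_1,k_2) \leq 1$; for $\Delta \geq 6$ it fails, which is consistent with the arbitrarily bad ratio exhibited in Fig.~\ref{fig:bad_ratio}.
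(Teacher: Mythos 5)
Your rerouting step is essentially the paper's (a local detour around the offending vertex $v$, exploiting that for $\deg(v)\le 5$ one of the two arcs at $v$ contains at most one further edge), but your termination argument has a genuine gap exactly at the point you flag as ``the main obstacle.'' The detour $\pi'$ crosses $k_i$ new primal edges $vw$, and each such neighbor $w$ thereby acquires a side label; if some other crossing of $p$ (untouched by the reroute) already gives $w$ the opposite label, then $w$ becomes a new $LR$-vertex. Nothing in the construction rules this out, so the monovariant ``number of $LR$-vertices'' need not decrease, and the lexicographic pair does not help since $|p|$ is constant (both paths are shortest). The promised ``local orientation analysis'' is precisely the missing content of the proof, not a detail. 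The paper sidesteps this by choosing a different progress measure: it takes the \emph{last} non-good edge $e$ of $p$ (so the entire suffix after $e$ is already good), reroutes locally, and shows that the replaced piece together with the old suffix lies in a region in which the only vertex whose label changes is $v$ itself; hence the all-good suffix strictly lengthens even if new $LR$-vertices are created on the earlier portion of the path. Since the good suffix is bounded by $|p|$, this terminates.

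Two secondary points also need repair. First, your claim that $|q|\ge 1$ because consecutive crossings at $v$ cannot put $v$ on opposite sides fails when a face is incident to $v$ more than once (i.e., $v$ is a cutvertex of its face boundary); the paper's case analysis explicitly derives conclusions of the form ``$f_2=f_4$ and $v$ is a cutvertex'' to handle exactly these degeneracies. Second, you do not verify that the rerouted path is simple; the faces traversed by $\pi'$ may already occur on $p$, and the paper again uses the shortest-path property to force the relevant face identifications rather than assuming simplicity.
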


\begin{proof}
	\begin{figure}[tb]
		\centering
			\includegraphics[page=2]{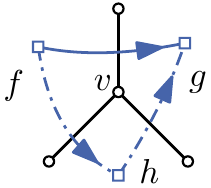}
		\caption{Inconsistent path around (a) a degree-4 vertex and (b,c) a degree-5
		vertex.}
		\label{fig:degree_45}
	\end{figure}
              
        Let $p$ be a shortest $st$-path in $\ExtG$.  We call an edge
        $e$ of $p$ \emph{good} if the vertices left and right of it do
        not have label $LR$ in the labeling $l_p$ induced by~$p$.

        If~$p$ is not consistent, then let $e$ denote the last edge
        of~$p$ that is not good.  Then an endpoint $v$ of the primal
        edge corresponding to $e$ has label $LR$.  Without loss of
        generality, we may assume that $v$ lies left of $e$.  Since
        $l_p(v) = LR$, there is an edge $e'$ of $p$ that has $v$ to
        its right.  By the choice of $e$, it follows that $e'$ lies
        before $e$ on $p$.
        We now distinguish cases based on the degree of~$v$.

        If~$\deg(v) \le 3$, then we find that $p$ enters or leaves a
        face twice, which contradicts the assumption that it is a
        shortest $st$-path.
        
        If~$\deg(v) = 4$, we denote the edges around $v$ in clockwise
        order as $e_1,\dots,e_4$ such that $e'$ crosses $e_1$.
        Moreover, we denote the faces incident to $v$ in clockwise
        order as $f_1,\dots,f_4$ where $f_1$ is the starting face of
        $e'$.

        Since no face has two incoming or two outgoing edges of $p$,
        it follows that $e' = f_1f_2$ crosses $e_1$ and $e = f_4f_3$
        crosses $e_3$; see Fig.~\ref{fig:degree_45}a.  Let $p'$ be the path obtained
        from $p$ by replacing the subpath~$p[f_1,f_4]$ by the edge
				$f_1f_4$ that crosses $e_4$.  Since $p$ is a shortest path, it follows
        that $f_2=f_4$.  By construction, it is~$l_{p'}(v) = L$.
        Observe that $p'[f_4,t] = p[f_4,t]$ lies inside the region
        $\rho$ bounded by $p[f_1,f_4]$ and a curve connecting $f_1$
        and $f_4$ that crosses $e_4$.  The only vertex inside this
        region whose label changed is $v$.  Therefore, the path
        $p'[f_1,t]$ consists of good edges, and we have thus increased
        the length of the suffix of the shortest path that consists of
        good edges.

        Now assume that $\deg(v) = 5$.  We denote the edges around $v$
        as $e_1,\dots,e_5$ in clockwise order such that $e'$ crosses
        $e_1$.  We further denote the faces incident to $v$ in
        clockwise order as $f_1,\dots,f_5$ such that $e'$ starts in
        $f_1$.  Since no face has two incoming or two outgoing edges,
        it follows that either $e$ crosses $e_4$ from $f_5$ to $f_4$
        or $e$ crosses $e_3$ from $f_4$ to $f_3$.

        If $e$ crosses $e_3$, then we consider the path $p'$ obtained
        from $p$ by replacing the subpath $p[f_2,f_3]$ by the 
        edge that crosses $e_3$; see Fig.~\ref{fig:degree_45}b.  As above, it
        follows that $f_2=f_4$ and $v$ is a cutvertex and that
        $p'[f_1,t]$ consists of good edges.

        If $e$ crosses $e_4$, then we obtain $p'$ by replacing
        $p[f_1,f_5]$ by the single edge that crosses $e_5$; see
        Fig.~\ref{fig:degree_45}c. As above, we find that $f_2 = f_5$ and $v$ is a
        cutvertex and that $p'[f_1,t]$ consists of good edges.

				Thus, in all cases, we increase the length of the suffix of the shortest
				path consisting of good edges.  Eventually, we thus arrive at a shortest
				path whose edges are all good and that hence is consistent.
\end{proof} 


\begin{theorem}
	Let $G=(V, E)$ a planar embedded graph with maximum vertex-degree $\Delta$ and
	let $p$ be a shortest $st$-path in $\ExtG$ with $s, t \in V$.
	Then there is a consistent path of length at most $(\Delta -2) |p|$.
\end{theorem}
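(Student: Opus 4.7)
The approach is to start from a shortest $st$-path $p$ of length $|p|$ in $\ExtG$ and construct a consistent $st$-path $p'$ of length at most $(\Delta - 2)|p|$ by locally rerouting $p$ around each primal vertex that is assigned the label $LR$ by $l_p$.

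The core primitive is a \emph{detour around a primal vertex}: for a primal vertex $v$ of degree $d_v$, any two faces $f, f'$ incident to $v$ that are adjacent in $\ExtG$ across a primal edge incident to $v$ can also be connected by a dual path of length $d_v - 1 \le \Delta - 1$ that goes around $v$ through the remaining $d_v - 2$ faces incident to $v$. This detour replaces the single crossing of one edge incident to $v$ by crossings of the other $d_v - 1$ edges incident to $v$, all of which put $v$ consistently on the same side of $p'$.

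I would then process the edges of $p$ in order from $s$ to $t$, maintaining the partial labeling induced by the prefix of $p'$ already constructed. For each edge $e$ of $p$ crossing a primal edge $uv$, if adding $e$ preserves compatibility with the current labeling, I extend $p'$ by~$e$; otherwise, the new crossing would assign the label $LR$ to $u$ or $v$, and I instead extend $p'$ by a detour around the offending endpoint. By construction, the resulting path $p'$ avoids the label $LR$ on every primal vertex and is therefore consistent.

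The main obstacle is obtaining the precise bound $(\Delta - 2)|p|$ rather than the naive bound $(\Delta - 1)|p|$. A straightforward analysis, replacing each offending edge by a full detour of length $\Delta - 1$, loses one factor. To shave off the extra edge, the proof has to amortize: when $p$ uses several consecutive edges around the same primal vertex $v$, a single detour around $v$ can replace all of them at once, so the cost per replaced edge drops below $\Delta - 1$. Since two consecutive edges of $p$ always share a face, this amortization is natural, but making it quantitative across the whole path---and ensuring that the chosen detours do not create new $LR$ labels at other primal vertices---is where I expect the technical heart of the argument to lie.
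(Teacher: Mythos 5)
Your high-level strategy---reroute the shortest path around each primal vertex that would receive label $LR$---is the same one the paper uses, but your write-up stops exactly at the two points where the proof has to do real work, and you say so yourself. That makes this a plan rather than a proof. Concretely, the two gaps are: (1) the quantitative saving that turns $(\Delta-1)|p|$ into $(\Delta-2)|p|$, and (2) the verification that a detour does not create new $LR$ labels elsewhere and yields a simple path. Neither is routine, and your proposed local move (replace the single offending edge $e$ crossing an edge at $v$ by the complementary arc of $d_v-1$ faces around $v$) does not work as stated: that arc re-enters faces that $p$ has already visited (since the earlier edge $e'$ of $p$ that put $v$ on the other side runs between two faces incident to $v$), so the result is not a simple path, and the arc re-crosses the primal edge crossed by $e'$, so you have not actually resolved the conflict at $v$.

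The paper's surgery is more global and is set up precisely to close these gaps. It takes the shortest \emph{inconsistent prefix} $p[s,f_2]$ ending with the edge $f_1f_2$ that puts $v$ on its left, identifies the last earlier edge $e=f_{i+1}f_i$ of $p$ crossing an edge at $v$ (which has $v$ on its right), and replaces the entire subpath $p[f_i,f_j]$---where $f_j$ is the first face around $v$, clockwise from $f_i$, that reappears on $p[f_2,t]$---by the arc $f_i f_{i-1}\cdots f_j$ around $v$ that keeps $v$ on the right. The choice of $f_j$ guarantees simplicity; the arc omits the two dual edges $f_kf_1$ and $f_1f_2$ and the replaced subpath has at least one edge, so each surgery adds at most $(k-2)-1\le\Delta-3$ edges, and since each surgery strictly shortens the inconsistent suffix there are at most $|p|$ of them, giving $|p|+(\Delta-3)|p|=(\Delta-2)|p|$. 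Finally, a region argument (the region $\rho$ bounded by $p[f_i,f_1]$ and the faces $f_1,f_k,\dots,f_i$) shows that the only vertices whose labels change are $v$ (now $R$) and neighbors $w$ of $v$, each of which ends up with label $L$ whether $vw$ is crossed by the retained prefix or by the new arc. You would need to supply arguments of this kind---in particular the choice of splice endpoints and the region argument---for your proposal to become a proof.
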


\begin{proof}

	\begin{figure}[tb]
		\centering
		\includegraphics[page=5]{figures/degree_3.pdf}
		\caption{Inconsistent path around a degree $k$ vertex.}
		\label{fig:degree_k}
	\end{figure}

	Let $p$ be an $st$-path in $\ExtG$. Assume that $p$ is not consistent.  Then
	there is a shortest prefix $p_2 = p[s, f_2] = p[s, f_1] \cdot f_1f_2$ of  $p$
	that is not consistent; refer to Fig.~\ref{fig:degree_k}. Let $v$ be a
	vertex incident to the primal edge of $f_1f_2$ with $l_{p_2}(v) = LR$.
	Without loss of generality let $f_1, f_2, \dots, f_k$ be the faces around $v$
	in counterclockwise order, i.e., $v$ lies left of $f_1f_2$.  
	
	Since $p_2$ is not consistent, there is a second edge of $p_2$ that
	crosses a primal edge incident to $v$.  Let $e$ be the last edge of
	$p[s, f_1]$ that crosses a primal edge incident to $v$.  Since $p_2$ is the
	shortest inconsistent prefix of $p$, $v$ lies right of $e$, i.e., $e =
	f_{i+1}f_i$ for some $i$ with $ 2 < i \leq k - 1$.  Moreover, let $f_j$ be the first
	vertex in clockwise order from $f_i$ that lies on the path
	$p[f_2, t]$. Note that such a vertex $f_j$ exists, since at the latest $f_2$
	satisfies the condition.

	Let $q$ be the path $f_i f_{i-1} \cdots f_j$. We obtain a path $p'$ from $p$ by
	replacing $p[f_i, f_j]$ by $q$, i.e., $p' = p[s,f_i] \cdot q \cdot p[f_j,t]$.
	Note that, since $f_j$ is the first vertex in clockwise order on $p[f_2, t]$,
	$p'$ is a simple path.  Since $q$ does not contain the edges $f_kf_1$ and $f_1
	f_2$, and $p[f_i,f_j]$ contains at least one edge, the path $p'$ has length at
	most $|p| + (k - 2) - 1$.	We claim that the prefix $p'_j = p'[s,f_j]$ is
	consistent.  
	
	Then,  since $p'[f_j, t]$ is a subpath of $p[f_2, t]$ and 
	$p'[s,f_j]$ is consistent, it follows that we have decreased the maximum
	length of a suffix of the path whose removal results in an inconsistent path.
	Since this suffix has initially length at most $|p|$, we inductively find a
	consistent $st$-path of length at most $(\Delta - 2) |p|$.

	It remains to prove that $p'[s, f_j]$ is consistent.  Since $p[s, f_2]$ is the
	shortest inconsistent prefix of $p$, the prefix $p[s, f_1]$ is consistent.
	Therefore, $v$ is right of $p[s, f_i] = p'[s, f_i]$. By construction, $v$ is
	right of $q$. Thus, we have $l_{p'_j}(v) = R$.  The only vertices $w$ of $G$
	with $l_{p'_j}(w) = LR$ can be neighbors of $v$, as otherwise $p[s, f_1]$
	would not be consistent.

	Consider the region $\rho$ enclosed by the path $p[f_i, f_1]$ and $f_1, f_k,
	\dots, f_i$ that contains $v$; refer to Fig.~\ref{fig:degree_k}.  The prefix
	$p[s, f_1] = p'[s, f_1]$ lies outside of $\rho$ and the path $q$ lies entirely
	in $\rho$.  Moreover, in case that $vw$ is crossed by $p'[s, f_i]$, $w$ lies
	outside of $\rho$. On the other hand, if $q$ crosses an edge $vw$, then $w$
	lies inside $\rho$.  Thus, in both cases we immediately get that
	$l_{p'_j}(w) = L$. Therefore, the prefix $p'[s, f_j]$ is consistent.
\end{proof}

\section{Consistent Shortest $st$-paths}
\label{sec:consistent_shortest}

In Section~\ref{sec:bounded_degree} we showed that every shortest $st$-path in
the extended dual $\ExtG$ of a graph $G$ with vertex degree at most $3$ is
consistent. For every graph of maximum degree $5$, there is a shortest $st$-path
$\ExtG$ that is consistent. On the other hand, Fig.~\ref{fig:bad_ratio} shows
that, starting from degree $6$, there are graphs whose shortest $st$-paths are
not consistent.  In this section we investigate the problem of deciding whether
$\ExtG$ contains a consistent shortest $st$-path. As a consequence of
Proposition~\ref{prop:consistent_path} this problem is in $\cNP$.

In Lemma~\ref{lemma:edge_disjoint} we show that finding a consistent $st$-path
$p$ in $\ExtG$ is closely related to finding two edge-disjoint paths in $G$.
Especially, we are interested in two edge-disjoint paths where the length of one
is minimized. Eilam-Tzoreff~\cite{EILAMTZOREFF1998113} proved that this problem
is in general $\cNP$-complete.  In planar graphs the sum of the length of two
vertex-disjoint paths can be minimized efficiently~\cite{KOBAYASHI2010234}. In
general directed graphs the problem is $\cNP$-hard~\cite{FORTUNE1980111}.
Finding two edge-disjoint paths in acyclic directed graphs is
$\cNP$-complete~\cite{4567876}. 

The closest relative to our problem is certainly the work of Eilam-Tzoreff.
In fact their result can be modified to show that it is $\cNP$-hard to decide whether a
graph contains two edge-disjoint $st$-paths such that one of them is a shortest
path.  We study this problem in the planar setting with the additional
restriction that $s$ and $t$ lie on a common face of the subgraph $\Gsp$ of
$\ExtG$ that contains all shortest paths from $s$ to $t$.

\begin{lemma}
 \label{lemma:edge_disjoint}
 An $st$-path $p$ in $\ExtG$ is consistent if and only if there is an $st$-path
 $p'$ in $\ExtG$ that is edge-disjoint from $p$ and that does not cross $p$.
\end{lemma}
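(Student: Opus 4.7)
The plan is to prove each direction via a topological correspondence between the closed walk $p \cup p'$ and a valid labeling of $V(G)$.

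For the ``$\Leftarrow$'' direction, suppose $p'$ is an edge-disjoint, non-crossing $st$-path in $\ExtG$. The union $C := p \cup p'$ is a closed walk in the planar embedding of $\ExtG$; the non-crossing property allows me to slightly separate $p$ from $p'$ at any shared interior vertex, realizing $C$ as a simple closed (Jordan) curve. By the Jordan curve theorem, $C$ bounds two regions $R_L$ and $R_R$. Since every edge of $C$ is a dual edge of $G$ and every face $F_v$ of $G^\star$ (for $v \in V(G) \setminus \{s,t\}$) is bounded by dual edges, no such face is cut by $C$; hence each $F_v$ lies entirely in one of $R_L$, $R_R$. Define $l(v) := L$ or $R$ accordingly (and set $l(s), l(t)$ arbitrarily). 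For an edge $e^\star$ of $p$ whose primal edge $uv$ has $u$ left of $e^\star$, the faces $F_u, F_v$ lie locally on opposite sides of $e^\star$, with $F_u$ on the left. Because $p$ is a simple subpath of $C$, the ``local left of $p$'' lies globally in a single region, say $R_L$; so $l(u) = L$ and $l(v) = R$ for every edge of $p$, and $l$ witnesses the consistency of $p$.

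For the ``$\Rightarrow$'' direction, suppose $p$ is consistent with labeling $l$. By Proposition~\ref{prop:consistent_path}, there is a straight-line drawing $\Gamma$ of $G$ with $p(\Gamma) = p$. Position $\Gamma$ so that $st$ is horizontal; consistency then places every $L$-vertex in the open upper half-plane $H^+$, every $R$-vertex in $H^-$, and ensures each edge of $E_p$ has its $L$-endpoint in $H^+$ and its $R$-endpoint in $H^-$. The portion of $\Gamma$ inside $\overline{H^+}$ consists of $L$-vertices, $L$-$L$ edges, and the bounded ``upper halves'' of $E_p$-edges --- straight segments from points of $st$ up to $L$-endpoints. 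These upper halves do not disconnect $\overline{H^+}$ (each is a bounded arc anchored on $st$, and the region above the drawing is unbounded), so I can choose a simple curve $\gamma$ from $s$ to $t$ in $\overline{H^+}$ that avoids every upper half and every vertex. Then $\gamma$ crosses only $L$-$L$ edges and meets $st$ only at $s$ and $t$. The face sequence that $\gamma$ visits, together with its first and last $\ExtG$-edges at $s, t$, defines an $st$-walk in $\ExtG$; shortcutting repeated faces yields the desired simple $st$-path $p'$. It is edge-disjoint from $p$ since the primal edges the two paths cross are disjoint, and non-crossing because at any face $f$ visited by both paths, the two dual edges of $p$ at $f$ lie on the part of $\partial f$ below $st$ while those of $p'$ lie above, so they do not alternate around $f$.

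The main obstacle will be the topological bookkeeping in the ``$\Rightarrow$'' direction: showing carefully that the upper halves of $E_p$-edges do not disconnect $\overline{H^+}$ (so that $\gamma$ exists) and translating the topological fact that $\gamma$ lies above $st$ into the combinatorial non-crossing condition at every shared face.
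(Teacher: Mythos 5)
Your ``$\Leftarrow$'' direction is essentially the paper's argument: the paper likewise reads the labeling off the regions cut out by $p \cup p'$ (labelling each region $L$ or $R$ according to its side of the maximal subpath of $p$ on its boundary), and your Jordan-curve phrasing---perturbing at shared dual vertices to obtain one simple closed curve with exactly two regions---is an equivalent packaging of the same idea. The ``$\Rightarrow$'' direction is where you genuinely diverge. The paper's construction is much shorter: take the straight-line drawing guaranteed by Proposition~\ref{prop:consistent_path}, let $g$ be the \emph{line} through the segment $st$, and observe that, every edge being a straight segment, it meets $g$ at most once; hence the complement of $st$ in $g$ (the two rays, joined through the outer face) already traces an $st$-path in $\ExtG$ that is edge-disjoint from $p$ and does not cross it. Your half-plane curve $\gamma$ reaches the same goal but pays for it with the topological bookkeeping you flag yourself: showing the ``upper halves'' anchored on $st$ do not separate $s$ from $t$ in $\overline{H^+}$, and the interleaving argument at shared faces. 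Both routes are correct; the ray trick simply turns edge-disjointness into a one-line consequence of convexity of segments. One overstatement in your write-up: consistency does \emph{not} place every $L$-vertex in $H^+$---only endpoints of edges crossed by $st$ are forced onto the corresponding sides, while a vertex not incident to a crossed edge may carry either label and sit anywhere. Consequently $\gamma$ may also cross edges with one endpoint in each open half-plane that meet the line $g$ outside the segment $st$, so it does not cross ``only $L$-$L$ edges.'' Neither claim is needed, though: edge-disjointness follows solely from $\gamma$ avoiding the upper halves of the crossed edges, so your proof survives as written.
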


\begin{proof}
	\begin{figure}[tb]
		\centering
			\includegraphics[page=1]{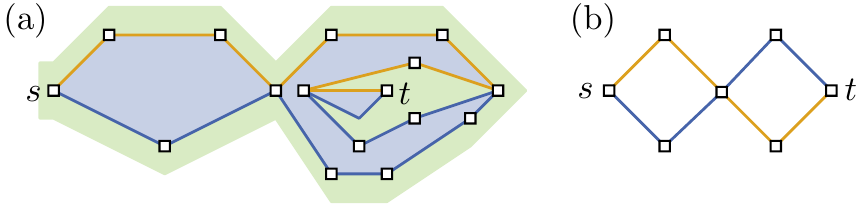}
		\caption{(a) The green regions are right of $p$ (blue) and the blue left of
		$p$. (b) The outer region that is not bounded by 
		maximal subpaths of $p$ and $p'$.}
		\label{fig:edge_disjoint_paths}
	\end{figure}

	The paths $p$ and $p'$ define a set of regions in the plane.  Since $p$ and
	$p'$ are non-crossing, each region is bounded by one maximal subpath of $p$
	and one maximal subpath of $p'$ (Fig.~\ref{fig:edge_disjoint_paths}). We label
	each region $\rho$ with either $L$ or $R$, depending on whether $\rho$ lies
	left or right of the unique maximal subpath of $p$ on its boundary. We define
	a labeling $l$ of $G$ by giving each vertex $v$ the label of the region $\rho$
	that contains it. We claim that $l$ is compatible with $p$.

	Since $p$ and $p'$ are edge-disjoint, every primal edge connects vertices
	of the same or adjacent regions. Moreover, by construction,
	vertices of adjacent regions have different labels. Thus all vertices left of
	$p$ have label $L$ and all vertices right of $p$ have label $R$. That is $l$
	is compatible with $p$, i.e., $p$ is consistent.

	\begin{figure}[tb]
		\centering
		\includegraphics[page=3]{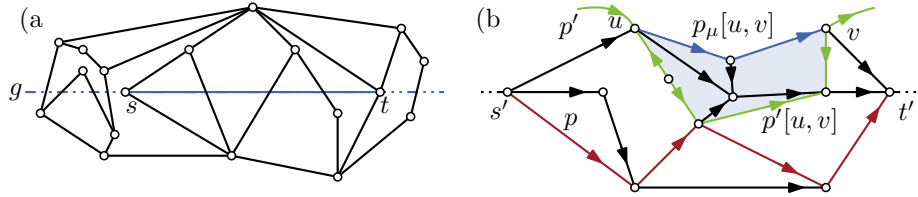}
		\caption{(a) The line $g$ through the segment $st$ induces a path in
		$\ExtG$. (b) Modification of the undirected path $p'$ edge-disjoint from $p$.}
		\label{fig:drawing_edge_disjoint}
	\end{figure}

	Conversely, assume that $p$ is consistent. By
	Proposition~\ref{prop:consistent_path} there is a straight-line drawing of $G$
	such that the segment $st$ intersects the same edges as $p$ and in the same
	order (Fig.~\ref{fig:drawing_edge_disjoint}a). Let $g$ be the line that
	contains the segment $st$. 
	%
	Each edge of $G$ intersects $g$ at most once.	
	Thus, the complement of $st$ in $g$ defines a path from $s$ to $t$ in
	$\ExtG$ that is edge-disjoint from $p$ and does not cross~$p$.
\end{proof} 

Thus, we now consider the problem of finding a consistent shortest
$st$-path as an edge-disjoint path problem in $\ExtG$.  Our proof
strategy consists of three steps.
\begin{inparaenum}[Step 1)]
\item We first show that the problem is equivalent
to finding two edge-disjoint paths $p$ and $q$ in a directed graph
$\DirG$ such that $p$ is directed and $q$ is undirected.
\item We modify
$\DirG$ such that $p$ is a path in a specific subgraph $\Gsp$ and $q$
lies in the subgraph $\coGsp$. These two graphs may share an edge set
$\hat{E}$ such that each edge in $\hat{E}$ can be an edge of
$p$ or of $q$. Moreover, we find pairs of edges $e$ and $e'$ in
$\hat{E}$ such that the path $p$ in $\Gsp$ (the path $q$ in $\coGsp$)
contains either $e$ or $e'$.
\item Finally, we use these properties to
reduce our problem to $2$-SAT.
\end{inparaenum}

We begin with Step 1.
A directed graph $\DirG=(V' \cup \{s, t\}, E')$ is \emph{$st$-friendly}
if $\ExtG$ contains a consistent shortest $st$-path if and only if $\DirG$
contains a directed $st$-path $p$ and an undirected $st$-path $p'$ that is
edge-disjoint from $p$ and does not cross $p$. We obtain an $st$-friendly graph
$\DirG = (\DirV, \DirE)$ from $\ExtG$ as follows.  Denote by $\Gsp$ the directed
acyclic graph that contains all shortest paths from $s$ to $t$ in $\ExtG=(V,
E)$.  If an edge $uv \in E$ is an edge of $\Gsp$, we add it to $\DirG$. For all
remaining edges $uv$, we add a subdivision vertex $x$ to $\DirG$ and add the
directed edges $xu, xv$ to $\DirG$ in this direction. We claim that $\DirG$ is
$st$-friendly.

Let $p$ be a consistent shortest $st$-path in $\ExtG$. By
Lemma~\ref{lemma:edge_disjoint} there is a path $p'$ in $\ExtG$ that is
edge-disjoint from $p$ and does not cross $p$. By construction $p$ corresponds
to a directed path in $\Gsp$ and $p$ corresponds to an undirected path in
$\DirG$. Conversely,  due to the directions of the edges $xv, xu$, every
directed $st$-path $q$ in $\DirG$ is a directed path in $\Gsp$, and therefore it
is a shortest $st$-path in $\ExtG$. If there is an undirected path $q'$ that is
edge-disjoint from $q$ and does not cross $q$, we obtain a path $p'$ from $q'$
by contracting edges incident to split vertices $x$. Hence, $\DirG$ is
$st$-friendly.
 
We consider the following special case, where $s$ and $t$ lie on a common face
$o$ of the subgraph $\Gsp$ of $\DirG$. Without loss of generality, let $o$ be
the outer face of $\Gsp$ and let $t$ lie on the outer face of $\DirG$.  We
denote by $\pu$ and  $\pl$ the upper and lower $st$-path of $\Gsp$ on the
boundary of $o$.
A vertex $v$ of $\Gsp$ is an \emph{interior vertex} if $v$ does not lie on
$o$. An edge $uv$ of $\Gsp$ is an \emph{interior edge} if $u$ and $v$ are
interior vertices.  An edge $e$ of $\Gsp$ is a \emph{chord} if both its
endpoints lie on $o$ but $e$ is not an edge on the boundary of $o$.

\begin{lemma} \label{lemma:long_on_boundary}
	%
	For a directed $st$-path $p$ and an undirected $st$-path $p'$, that are
	edge-disjoint and non-crossing, there is an undirected $st$-path $p''$ that is
	edge-disjoint from $p$, does not cross $p$, and that does not use interior
	vertices of $\Gsp$.
\end{lemma}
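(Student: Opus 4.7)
My plan is to iteratively modify $p'$ to reduce the number of interior vertices of $\Gsp$ it visits, until none remain. Let $\overline D$ be the closed disk bounded by the cycle $C = \pu \cup \pl$; every interior vertex of $\Gsp$ lies in the open disk $\mathrm{int}(D)$, and $p \subseteq \overline D$. If $p'$ uses no interior vertex of $\Gsp$, I set $p'' = p'$. Otherwise, I pick a maximal subpath $\alpha = p'[x, y]$ that, viewed as a curve, enters $D$ at $x$, lies in $\mathrm{int}(D)$ strictly between $x$ and $y$, and exits at $y \in C$. (If a maximal such subpath would naturally end at a subdivision vertex inside $D$, I extend $\alpha$ by one edge to its unique non-interior neighbor on $C$, using that a subdivision vertex has only two neighbors in $\DirG$.) The curve $\alpha$ partitions $D$ into two regions $R_1, R_2$, each bounded by $\alpha$ together with one of the two arcs $\gamma_1, \gamma_2$ of $C \setminus \{x, y\}$.

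Since $p$ is edge-disjoint from and does not cross $\alpha$, it meets $\alpha$ only at $\{x, y\}$. I plan to replace $\alpha$ in $p'$ by an $xy$-path in $\DirG \setminus E(p)$ that avoids interior vertices of $\Gsp$. The easy case is when $s$ and $t$ lie on the same arc $\gamma_j$: then $p \subseteq \overline{R_j}$, and the other arc $\gamma_{3-j}$ is edge-disjoint from $p$ and serves as the replacement. Non-crossing at $x, y$ is preserved because the new edge at each endpoint lies in the cyclic sector of $C$ opposite to that of the $\alpha$-edge, which was already non-alternating with $p$'s edges there. Each such replacement strictly decreases the number of interior vertices of $\Gsp$ used by $p'$, so iterating finitely many times yields the desired $p''$.

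The main obstacle will be the case where $x$ and $y$ separate $s$ and $t$ on $C$: then $p$ must pass through $x$ or $y$, and both arcs $\gamma_1, \gamma_2$ may share edges with $p$, so no subarc of $C$ alone is a suitable replacement. Here I plan to combine arcs of $C$ with detours through subdivision vertices of $\DirG$ lying in the outer face (outside $\overline D$), using the already-existing outer pieces of $p'$ to certify that $\DirG \setminus E(p)$ still contains the needed $xy$-connection; concretely, the outer portions of $p'$ flanking $\alpha$ give, after splicing with appropriate subarcs of $C$, an $xy$-path that avoids $E(p)$ and the interior of $\Gsp$. A last routine step is to shortcut any accidental vertex repetitions in the concatenation so that $p''$ is a simple $st$-path; such shortcutting preserves edge-disjointness and non-crossing with $p$.
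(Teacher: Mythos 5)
Your easy case is essentially the paper's argument: a maximal excursion of $p'$ into the interior of $\Gsp$ is replaced by the arc of $C=\pu\cup\pl$ bounding the region on the far side of $p$, and iterating eliminates all interior vertices. The gap is in what you call the main obstacle. You treat the configuration in which the endpoints $x,y$ of the excursion $\alpha$ separate $s$ from $t$ on $C$ as a genuine case, and for it you offer only a sketch (detours through subdivision vertices in the outer face, spliced with subarcs of $C$ and the outer portions of $p'$) with no argument that the resulting $xy$-walk exists, is edge-disjoint from $p$, and is non-crossing with $p$. That sketch is not a proof. More importantly, the case is vacuous, which is precisely the observation the paper's proof opens with: since $p$ is a \emph{directed} $st$-path in $\DirG$, the orientation of the subdivision edges forces it to be a directed path in $\Gsp$, hence $p\subseteq\overline D$. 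If $x$ and $y$ separated $s$ from $t$ on $C$, then $\alpha$ would separate $s$ from $t$ in $\overline D$, so $p$ would have to meet $\alpha$ at a vertex where its two edges lie on opposite sides of $\alpha$; at an internal vertex of $\alpha$ this is a crossing with $p'$, and at $x$ or $y$ the two $p$-edges (both inside $\overline D$) alternate in the rotation with the $\alpha$-edge and the other edge of $p'$, again a crossing. Hence every maximal excursion has both endpoints on $\pl$ or both on $\pu$, with $s$ and $t$ on the complementary arc, and only your easy case occurs. Replacing your second case by this vacuity argument closes the proof.

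Two smaller points. Your parenthetical about extending $\alpha$ past a subdivision vertex to ``its unique non-interior neighbor on $C$'' is unfounded (a subdivision vertex inside $D$ may have both neighbors interior) but also unnecessary: a maximal subpath of $p'$ whose internal vertices all lie in $\mathrm{int}(D)$ automatically has both endpoints on $C$, because $s,t\in C$. And in the easy case you should note, as you implicitly do, that $\gamma_{3-j}$ meets $\overline{R_j}\supseteq p$ only in $\{x,y\}$, which is what gives edge-disjointness even when $p$ runs along parts of $C$.
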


\begin{proof}
	%
  %
	Since $p$ and $p'$ are non-crossing, there are two distinct vertices $u,v$ on
	$\pl$ or on $\pu$, say $\pu$, such that the inner vertices of $p'[u,v]$ lie in
	the interior of $\Gsp$; refer to Fig.~\ref{fig:drawing_edge_disjoint}b. Moreover, since
	$p'$ and $p$ are non-crossing,  the region enclosed by $p'[u,v]$ and
	$\pu[u,v]$ does not contain a vertex of $p$ in its interior.  Therefore, we
	obtain $p''$ by iteratively replacing pieces in the form of $p'[u,v]$ by
	$\pu[u,v]$.
\end{proof}


\begin{figure}
	\centering
	\includegraphics{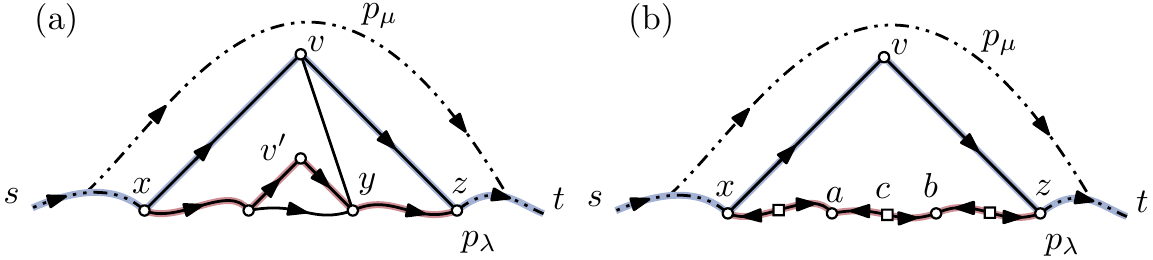}
	\caption{
		(a) The red directed path can be circumvented with the blue directed path via vertex $v$.
		(b) The red path consists of avoidable edges.	
	}
	\label{fig:interior_rerouting}
\end{figure}

This finishes Step 1, and we continue with Step 2.
In the following, we iteratively simplify the structure of $\Gsp$ while
preserving $st$-friendliness of $\DirG$.  Due to
Lemma~\ref{lemma:long_on_boundary}, the graph $\Gsp/e$, obtained from
contracting an edge $e$ of $\Gsp$, is $st$-friendly, if $e$ is an interior edge.
This may generate a separating triangle $xyz$. Let $v$ be a vertex in the
interior of $xyz$ and let $p$ be a directed $st$-path that contains $v$. Then,
$p$ contains at least two vertices of $x,y,z$. Hence, $p$ can be rerouted using
an edge of $xyz$. Thus, the graph after removing all vertices in the interior of
$xyz$ is $st$-friendly.  After contracting all interior edges of $\Gsp$, each
neighbor of an interior vertex of $\Gsp$ lies either on $\pl$ or  on $\pu$.  The
remaining edges are edges on $\pl \cup \pu$ and chords.

Consider three vertices $x,y,z$ that lie in this order on $\pl$ ($\pu)$ and two
interior vertices $v$ and $v'$, with $xv, v'y, vz \in \DirE$; refer to
Fig.~\ref{fig:interior_rerouting}a. Note that $v$ and $v'$ can coincide.  Then,
every directed $st$-path $p$ that contains $y$ also contains $x$ and $z$.
Hence, $p$ can be rerouted through the edges $xv, vz$ and as a consequence of
Lemma~\ref{lemma:long_on_boundary}, the graph $\Gsp-v'y$ is $st$-friendly.
Analogously, if $\Gsp$ contains the edge $yv'$, $\Gsp-yv'$ remains
$st$-friendly.  We call such edges \emph{circumventable}.

We refer to edges of a subpath $\pl[x, z]$ ($\pu[x,z]$) as \emph{avoidable} if
there exists an interior vertex $v$ with $xv, vz \in \DirE$
(Fig.~\ref{fig:interior_rerouting}b).  If there exists a directed path $p$ that
uses an avoidable edge $ab$ it can be rerouted by replacing the corresponding
path $\pl[x,z]$ with the edges $xv,vz$.  Thus, we can split the edge $ab$ with a
vertex~$c$ and we direct the resulting edges from $c$ towards $a$ and $b$,
respectively, and remove the edge $ab$ from $\DirG$.  Finally, we iteratively
contract edges incident to vertices with in- and out-degree $1$, and we
iteratively remove vertices of degree at most~$1$, except for $s$ and $t$.
\begin{figure}[tb]
	\centering
	\includegraphics[page=3]{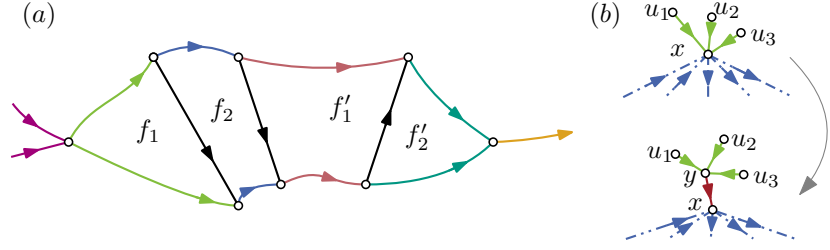}
	\caption{(a) Interior partners decoded by color of $2$-edge connected component of
		$\Gsp$. (b) Split a vertex $x$ on the boundary of $\ExtH$.}
	\label{fig:interior_partners}
\end{figure}
Since all interior edges of $\Gsp$ are contracted, circumventable interior edges
are removed and avoidable edges are replaced, each $2$-edge connected component
of $\Gsp$ is an outerplanar graph whose weak dual (excluding the outer face) is
a path; compare Fig.~\ref{fig:interior_partners}a. Each face $f$ of $\Gsp$, with
$f\not=o$, contains at least one edge $e_\lambda$ of $\pl$ and one edge $e_\mu$
on $\pu$.  Moreover, every directed $st$-path contains either $e_\lambda$ or $e_\mu$.
We refer to the edge sets $E_{f,\lambda} = E(f) \cap E(\pl)$ and $E_{f, \mu} =
E(f) \cap E(\pu)$ as \emph{interior partners}.

\begin{property}
	\label{prop:directed_path}
	Choosing a directed $st$-path in $\Gsp$ is equivalent to choosing for each
	face $f$ of $\Gsp$ one of the interior partners $E_{f,\mu}$ or $E_{f,\lambda}$
	such that the following condition holds.  Let $f_1, f_2$ be two adjacent faces
	that are separated by a chord $e$ that ends at $\pl$ ($\pu$) such that $f_1$
	is right of $e$ (left of $e$), then the choice of $E_{f_2, \mu}$ ($E_{f_2,
	\lambda}$) implies the choice of $E_{f_1, \mu}$ ($E_{f_1, \lambda}$).
\end{property}

In the following, we modify the \emph{exterior of $\DirG$}, i.e., $\coGsp =
\DirG - E(\Gsp)$, with the aim to obtain an analog property for the choice of
the undirected path.  We refer to edges of $\coGsp$ as \emph{exterior edges}. A
vertex  in $V(\coGsp) \setminus V(\Gsp)$ is an \emph{exterior vertex}.  

Since the undirected path is not allowed to cross the directed path, we split
each cut vertex $x$ into an upper copy $x_\mu$ and a lower copy $x_\lambda$. We
reconnect edges of $\pl$ and $\pu$ incident to $x$ to $x_\lambda$ and $x_\mu$,
respectively. Exterior edges incident to $x$ that are embedded to the right of
$\pl$ are reconnected to $x_\lambda$. Likewise, edges embedded to the left of
$\pu$ are reconnected to $x_\mu$. Note that this operation duplicates  bridges
of $\Gsp$. Thus, we forbid the undirected path to traverse these duplicates.
Observe that after this operation the outer face $o$ of $\Gsp$ is bounded by a
simple cycle.

Let $x$ be a vertex on $o$ that is incident to an exterior edge. In this case,
we insert a vertex $y$ to $\DirG$ and we remove each exterior edge $ux$ from
$\DirG$ and insert as a replacement edges $yx$ and $yu$; see
Fig.~\ref{fig:interior_partners}b. We refer to the edge $yx$ as a \emph{barrier}.
Since the barrier $yx$ is directed from $y$ to $x$, the modification
preserves the $st$-friendliness of $\DirG$.  We now exhaustively contract
exterior edges that are not barrier edges, and remove vertices in the interior
of separating triangles.

\begin{figure}[tb]
	\centering
	\includegraphics{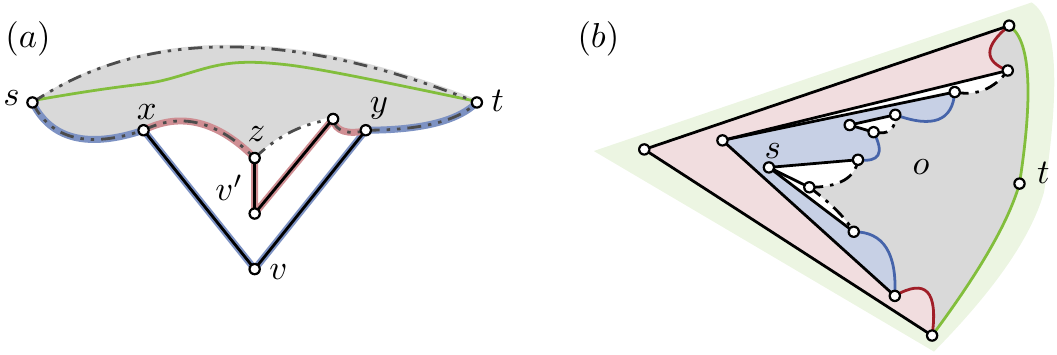}
	\caption{(a) If the undirected path contains $z$, it can be rerouted to use vertex
	$v$. (b) The color coding of the faces indicate the exterior partners.}
	\label{fig:exterior}
\end{figure}

Recall that $s$ and $t$ lie on a common face $o$ of the subgraph $\Gsp$ of
$\DirG$ and $t$ lies on the outer face of $\Gsp$.  Let $v$ be an exterior vertex
such that its neighbor $x$ comes before its neighbor $y$ on $p_i, i = \lambda,
\mu$, refer to Fig.~\ref{fig:exterior}a. Let $z$ be a vertex between $x$ and $y$
on $p_i$ that is connected to a vertex $v'$ such that the edge $v'z$ ($zv'$) lies in
the interior of the region bounded by $yvx$ and $p_i[x,y]$.  Consider a directed
$st$-path $p$ in $\Gsp$ and an undirected $st$-path $p'$ in $\DirG$ that is
edge-disjoint from $p$, that does not cross $p$ and that contains $v'$. Due to
Lemma~\ref{lemma:long_on_boundary} we can assume, that $p'$ does not contain an
interior vertex of $\Gsp$. Thus, it contains $x$ and $y$. We obtain a new path
$p''$ by replacing the subpath $p'[x,y]$ by $vx, vy$. Since $vx, vy$ are
exterior edges, $p''$ and $p$ are edge-disjoint and non-crossing. Thus, the
graph $\DirG-v'z$ ($\DirG - zv'$) is $st$-friendly.  After removing all such
edges, for any two neighbors $x$ and $y$ of an exterior vertex $v$, the paths
$o[x,y]$ and $o[y,x]$ each contains either $s$ and $t$.  Hence, the region bounded by
$yvx$ and $o[x, y]$ contains a second exterior vertex $v'$ if and only if
$o[x,y]$ contains either $s$ or $t$.  

Hence, the dual of $\coGsp$, with the dual vertex of $o$ removed, is a
caterpillar $C$, refer to Fig.~\ref{fig:exterior}b.  In case that $s$ or $t$ is
incident to an exterior vertex $v$, we can assume that the undirected path $p'$
contains the edge $sv$ ($vt$). Thus, for simplicity, we now assume that neither
$s$ nor $t$ is connected to an exterior vertex.  Let $a$ and $b$ be the vertices
in $C$ whose primal faces are incident to $s$ and $t$, respectively.  Then every
undirected $st$-path in $\coGsp$ from $s$ to $t$ traverses the primal faces of
the simple path $q$ from $a$ to $b$ in $C$.  Let $f$ be a primal face of a
vertex on $q$. Since we inserted the barrier edges to $\DirG$, every face
contains at least one edge $e_\lambda$ of $\pl$ and one edge $e_\mu$ of $\pu$.
Therefore, every undirected $st$-path in $\coGsp$ either contains $e_\lambda$ or
$e_\mu$. We refer to the sets $E_{f,\lambda} = E(f) \cap E(\pl)$ and
$E_{f,\mu} = E(f) \cap E(\pu)$ as \emph{exterior partners}.

\begin{property}
	\label{prop:undireted_path}
	Choosing an undirected $st$-path in $\coGsp$ is equivalent to choosing for
	each face $f \not = o$ of $\coGsp$ one of the exterior partners
	$E_{f,\lambda}$ or $E_{f, \mu}$.
\end{property}

This finishes Step 2, and we proceed to Step 3.  The problem of
finding a directed $st$-path $p$ and an undirected $st$-path $p'$ in
$\DirG$ reduces to a $2$-SAT instance as follows.  For each exterior
and interior partner we introduce variables $x_f$ and $x_g$,
respectively, where $f$ and $g$ correspond to the faces of the
partners.  If $x_f$ is true, $p'$ contains the edge of
$E_{f, \lambda}$, otherwise it contains $E_{f,\mu}$.  The conditions
on the choice of $p$ in Property~\ref{prop:directed_path} can be
formulated as implications.  Let $E_{f, \mu}$ an $E_{f, \lambda}$ be
exterior partners and let $E_{g, \mu}$ and $E_{g,\lambda}$ be interior
partners. In case that
$E_{f,\lambda} \cap E_{g,\lambda} \not=\emptyset$, either $p$ can
contain edges of $E_{g,\lambda}$ or $p'$ can contains edges of
$E_{f,\lambda}$ but not both.  Thus, $x_f$ and $x_g$ are not allowed
to be true at the same time, i.e., $x_f = \overline{x_g}$. Hence, we
have the following Theorem.

\begin{theorem}
	If $s$ and $t$ lie on a common face of $\Gsp$, it is decidable in polynomial time
	whether $\DirG$ has a directed $st$-path and an undirected $st$-path that are
	edge-disjoint and non-crossing.
\end{theorem}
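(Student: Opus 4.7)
The plan is to assemble the three-step reduction developed in the preceding text into a polynomial-time algorithm and close with a $2$-SAT encoding. Throughout, we may assume without loss of generality that $o$ is the outer face of $\Gsp$ and that $t$ lies on the outer face of $\DirG$.

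First, I would apply the sequence of $st$-friendliness-preserving modifications to $\Gsp$: contracting interior edges (justified by Lemma~\ref{lemma:long_on_boundary}), removing the interior of any resulting separating triangle, deleting circumventable edges, replacing avoidable edges by subdivided in-directed edges, and then iteratively contracting vertices of in- and out-degree $1$ and removing degree-$\le 1$ vertices distinct from $s$ and $t$. The outcome is the structure expressed by Property~\ref{prop:directed_path}: every inner face $f$ of $\Gsp$ has two interior partners $E_{f,\lambda}\subseteq E(\pl)$ and $E_{f,\mu}\subseteq E(\pu)$, and directed $st$-paths in $\Gsp$ correspond to per-face partner choices subject to implications across chords. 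Symmetrically, I would modify $\coGsp$ by splitting cut vertices into upper and lower copies, inserting barrier edges at boundary vertices incident to exterior edges, contracting non-barrier exterior edges, and excising the interiors of separating triangles. This yields Property~\ref{prop:undireted_path}: undirected $st$-paths in $\coGsp$ correspond to independent per-face choices of an exterior partner along the caterpillar-shaped dual.

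Second, I would encode the joint existence problem as a $2$-SAT instance. Introduce a Boolean variable $x_g$ for each inner face $g$ of $\Gsp$, with $x_g$ true meaning that the directed path $p$ uses $E_{g,\lambda}$, and a variable $y_f$ for each inner face $f$ of $\coGsp$ on the caterpillar path between the faces containing $s$ and $t$, with $y_f$ true meaning that the undirected path $p'$ uses $E_{f,\lambda}$. The local implications of Property~\ref{prop:directed_path} translate directly into $2$-clauses over the $x_g$'s. Edge-disjointness is enforced by adding, for every pair $(f,g)$ such that a $\pl$-edge is shared between $E_{g,\lambda}$ and $E_{f,\lambda}$, the clause $\lnot x_g \lor \lnot y_f$, and analogously for $\pu$. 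Non-crossing is then automatic: after Lemma~\ref{lemma:long_on_boundary} the directed path lives in $\Gsp$ and the undirected path in $\coGsp$ together with the barriers, so any edge-disjoint pair of such paths is non-crossing. The instance has size polynomial in $|\DirV|$, and $2$-SAT is solvable in linear time.

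The delicate part is to verify that each of the graph modifications really preserves $st$-friendliness and that the interior/exterior partner abstraction faithfully captures all valid pairs of paths. The rerouting arguments sketched in the text, together with Lemma~\ref{lemma:long_on_boundary}, provide the building blocks, but each reduction rule has to be checked in both directions — that any solution of the reduced instance lifts back to a consistent shortest $st$-path in $\ExtG$, and that no original solution is lost along the way. Once these invariants are in place, any linear-time $2$-SAT algorithm yields the claimed polynomial-time decision procedure.
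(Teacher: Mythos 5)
Your proposal follows essentially the same route as the paper: the same sequence of $st$-friendliness-preserving reductions leading to Properties~\ref{prop:directed_path} and~\ref{prop:undireted_path}, followed by the same $2$-SAT encoding with one variable per interior/exterior partner and conflict clauses where partners share an edge of $\pl$ or $\pu$. Your per-shared-edge NAND clauses are a slightly more careful rendering of the paper's ``$x_f = \overline{x_g}$'' constraint, but the argument is the same.
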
 

\begin{corollary}
	If $s$ and $t$ lie on a common face of $\Gsp$, it is decidable in polynomial
	time whether $\ExtG$ contains a consistent shortest $st$-path.
\end{corollary}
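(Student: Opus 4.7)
The plan is to deduce the corollary as an almost immediate consequence of the previous theorem together with the $st$-friendliness construction developed throughout Section~\ref{sec:consistent_shortest}. By definition, $\DirG$ is $st$-friendly, meaning that $\ExtG$ admits a consistent shortest $st$-path if and only if $\DirG$ admits a directed $st$-path $p$ together with an edge-disjoint, non-crossing undirected $st$-path $p'$. Hence the corollary will follow once we verify that the whole chain of reductions is polynomial and preserves the ``common face'' hypothesis.

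First I would describe the construction of $\DirG$ from $\ExtG$: form the subgraph $\Gsp$ of all shortest $st$-paths (computable by a standard BFS from~$s$), keep the edges of $\Gsp$ as directed edges, and subdivide every non-shortest-path edge $uv$ by a vertex $x$ with arcs $xu$ and $xv$, so that directed paths in $\DirG$ are exactly the shortest $st$-paths of $\ExtG$. All of this takes linear time, and the argument already given in the text shows that $\DirG$ is indeed $st$-friendly.

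Next I would check that the hypothesis of the theorem is preserved. The corollary assumes that $s$ and $t$ lie on a common face $o$ of $\Gsp$, which is exactly the assumption under which the three-step reduction of Section~\ref{sec:consistent_shortest} was carried out: the contractions of interior edges, the removal of circumventable and avoidable edges, the splitting of cut vertices, the insertion of barrier edges, and the final reduction to a $2$-SAT instance all run in polynomial time and preserve $st$-friendliness. Applying the theorem to $\DirG$ therefore decides in polynomial time whether a directed $st$-path and an edge-disjoint, non-crossing undirected $st$-path coexist in $\DirG$, which by $st$-friendliness is equivalent to the existence of a consistent shortest $st$-path in $\ExtG$.

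There is essentially no hard step here; the only thing to be mindful of is that every simplification operation performed on $\DirG$ and $\Gsp$ in Step~2 is efficiently realizable and preserves both the common-face assumption and $st$-friendliness, so that the $2$-SAT instance produced at the end truly characterizes the existence of a consistent shortest $st$-path in $\ExtG$. Putting these pieces together yields the polynomial-time decision procedure claimed by the corollary.
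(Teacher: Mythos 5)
Your proposal is correct and follows exactly the route the paper intends: the corollary is an immediate consequence of the preceding theorem via the $st$-friendliness of $\DirG$, whose construction is polynomial. The paper gives no separate proof, relying on precisely this chain of reasoning, so your write-up matches its approach.
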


\section{Parametrized Complexity of Short Consistent $st$-Paths}
\label{sec:fpt} 
In this section we show that edge insertion can be solved in FPT time
with respect to the minimum number of crossings of a straight-line
drawing of $G+st$ where $G$ is drawn without crossings and has the
specified embedding.  Let~$l$ be an arbitrary labeling of~$G$.  Observe
that~$l$ defines a directed subgraph of~$\ExtG$ by removing each edge
whose dual edge has endpoints with the same label and by directing all other edges $e$
such that the endpoint of its primal edge left of $e$ has label~$L$
and its other endpoint has label~$R$.  We denote this graph
by~$\ExtG(l)$.  Obviously, a shortest $st$-path in~$\ExtG(l)$ is
compatible with~$l$, and thus a corresponding drawing exists.
Clearly, given the labeling~$l$ a shortest $st$-path in~$\ExtG(l)$ can
be computed in linear time by a BFS.

Now assume that the length of a shortest consistent path in~$\ExtG$
is~$k$.  We propose a randomized FPT algorithm with running
time~$O(4^k n)$ for finding a shortest consistent path in~$\ExtG$, based on the
color-coding technique~\cite{DBLP:books/sp/CyganFKLMPPS15}. 

The algorithm works as follows.  First, we pick a random labeling
of~$G$ by labeling each vertex independently with~$L$ or $R$ with
probability~$1/2$.  We then compute a shortest path
in~$\ExtG(l)$.  We repeat this process $4^k$ times and report the
shortest path found in all iterations.  

Clearly the running time is~$O(4^k n)$.  Moreover, each reported path is
consistent, and therefore the algorithm outputs only consistent paths.  It
remains to show that the algorithm finds a path of length~$k$ with constant
probability.

Consider a single iteration of the procedure.  If the random labeling~$l$ is
compatible with~$p$, then the algorithm finds a path of length~$k$.  Therefore
the probability that our algorithm finds a consistent path of length~$k$ is at
least as high as the probability that $p$ is compatible with the random labeling
$l$.  Let~$V_L,V_R \subseteq V$ denote the vertices of~$V$ that are left and
right of~$p$, respectively.  Clearly it is~$|V_L|,|V_R| \le k$.  A random
labeling~$l$ is consistent with $p$ if it labels all vertices in~$V_L$ with $L$
and all vertices in $V_R$ with $R$.  Since vertices are labeled independently
with probability $1/2$, it follows that~$\Pr[ p$ is consistent with $l] =
(1/2)^{|V_L|} \cdot (1/2)^{|V_R|} \ge (1/2)^{2k} = (1/4)^k$.

Therefore, the probability that no path of length~$k$ is found in
$4^k$ iterations is at most~$(1-(1/4)^{k})^{4^k}$, which is
monotonically increasing and tends to~$1/e \approx 0.368$.  Thus the
algorithm succeeds with a probability of~$1-1/e \approx 0.632$.  The
success probability can be increased arbitrarily to~$1-\delta$,
$\delta > 0$ by repeating the algorithm~$\log(1/\delta)$ times.  The
probability that each iteration fails is then bounded from above
by~$(1/e)^{\log 1/\delta} = 1/e^{\log{1/\delta}} = \delta$.  E.g., to
reach a success probability of~$99\%$, it suffices to do
$\log 100 \le 5$ repetitions.  The algorithm can be derandomized with
standard techniques~\cite{DBLP:books/sp/CyganFKLMPPS15}.

\begin{theorem}
  There is a randomized algorithm $\mathcal A$ that computes a consistent path of
  length~$k$ if one exists with a success probability of~$1-\delta$. The running
  time of $\mathcal A$ is $O(\log(\delta^{-1}) 4^k n)$.
\end{theorem}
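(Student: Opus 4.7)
The plan is to follow the color-coding template outlined in the preceding discussion. I would first record the key structural observation: any labeling $l\colon V\to\{L,R\}$ induces a directed subgraph $\ExtG(l)$ of $\ExtG$ obtained by deleting each dual edge whose two primal endpoints share a label and orienting each surviving dual edge so that its left primal endpoint is labeled $L$ and its right primal endpoint is labeled $R$. By construction every $st$-path of $\ExtG(l)$ is compatible with $l$ and is therefore consistent, so (by Proposition~\ref{prop:consistent_path}) it corresponds to a valid geometric insertion. A shortest such path is found by BFS on $\ExtG(l)$ in $O(n)$ time.

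The algorithm itself samples each vertex's label independently and uniformly from $\{L,R\}$, runs the BFS on $\ExtG(l)$, and takes the minimum over $4^k$ such trials. Correctness of the output is automatic since every candidate produced this way is consistent. For the probability analysis I would fix any target shortest consistent $st$-path $p$ of length $k$ in $\ExtG$ and let $V_L,V_R\subseteq V$ be the primal vertices appearing left and right of $p$. Since each dual edge of $p$ crosses a single primal edge, each such crossing contributes at most one vertex to each of $V_L$ and $V_R$, whence $|V_L|,|V_R|\le k$. A uniformly random $l$ assigns the correct label to every vertex of $V_L\cup V_R$ with probability $(1/2)^{|V_L|+|V_R|}\ge (1/4)^{k}$; on this event the BFS returns a path of length at most $k$.

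Consequently, after $4^k$ independent trials the probability of failing to find a path of length $k$ is at most $(1-(1/4)^k)^{4^k}$, which, using the standard inequality $(1-x)^{1/x}\le 1/e$, is bounded above by $1/e$. Performing $\lceil\log(1/\delta)\rceil$ independent outer repetitions drives the failure probability down to at most $(1/e)^{\log(1/\delta)}=\delta$, giving overall running time $O(\log(\delta^{-1})\cdot 4^k\cdot n)$, as required.

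The main (and essentially only) obstacle is the probability accounting, namely justifying the bound $|V_L|+|V_R|\le 2k$ from the length of $p$: every other ingredient (BFS on $\ExtG(l)$, the consistency guarantee, and the color-coding amplification) is either immediate or a standard template. If derandomization is desired, I would invoke the standard replacement of uniform random labelings by an explicit family of $(n,2k)$-perfect hash functions of size $2^{O(k)}\log n$, which preserves the claimed FPT dependence in $k$ and only mildly inflates the polynomial factor in $n$.
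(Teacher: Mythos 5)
Your proposal is correct and follows essentially the same route as the paper: the same construction of $\ExtG(l)$, the same BFS over $4^k$ uniformly random labelings, the same bound $|V_L|,|V_R|\le k$ yielding success probability at least $(1/4)^k$ per trial, and the same $\log(1/\delta)$-fold amplification. The only (harmless) cosmetic difference is that you bound $(1-(1/4)^k)^{4^k}$ by $1/e$ via the standard inequality rather than by monotonicity, and you spell out the perfect-hash-family derandomization that the paper only cites.
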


\section{Conclusion}

We have shown that the problem of finding a short consistent $st$-paths in $\ExtG$
is tractable in special cases and fixed-parameter tractable in general. Whether
$\ExtG$ has a short consistent $st$-path is equivalent to the question of whether
$\ExtG$ has two edge-disjoint and non-crossing $st$-paths, where the length of
one path is minimized. Surprisingly, this is related to yet another purely graph
theoretic problem: does a directed graph $G$ have two edge-disjoint paths where
one is directed and the other is only undirected? By the result of
Eilam-Tzoreff~\cite{EILAMTZOREFF1998113} the former problem is in general
$\cNP$-hard. For planar graphs the computational complexity of these problems
remains an intriguing open question.

In this paper, we only considered planar graphs with a fixed combinatorial
embedding. Allowing for arbitrary embeddings opens new perspectives on the
problem and is interesting future work.

\bibliography{strings,references}

\newcommand{\bibalenex}[2]{Proceedings of the #1 Workshop on Algorithm
  Engineering and Experiments (ALENEX'#2)} \newcommand{\bibdac}[2]{Proceedings
  of the #1 Annual Design Automation Conference (DAC'#2)}
  \newcommand{\bibinvisau}[1]{Proceedings of the Australian Symposium on
  Information Visualisation (invis.au #1)}
  \newcommand{\bibieeepdp}[2]{Proceedings of the #1 IEEE Symposium on Parallel
  and Distributed Processing #2} \newcommand{\bibieeecs}[1]{Proceedings of the
  IEEE International Symposium on Circuits and Systems #1}
  \newcommand{\bibcccg}[2]{Proceedings of the #1 Canadian Conference on
  Computational Geometry (CCCG'#2)} \newcommand{\bibswat}[2]{Proceedings of the
  #1 Scandinavian Workshop on Algorithm Theory (SWAT'#2)}
  \newcommand{\bibipco}[2]{Proceedings of the #1 International Conference on
  Integer Programming and Combinatorial Optimization (IPCO'#2)}
  \newcommand{\bibsofsem}[2]{Proceedings of the #1 Conference on Current Trends
  in Theory and Practice of Computer Science (SOFSEM'#2)}
  \newcommand{\bibstoc}[2]{Proceedings of the #1 Annual ACM Symposium on Theory
  of Computing (STOC'#2)} \newcommand{\bibfocs}[2]{Proceedings of the #1 Annual
  Symposium on Foundations of Computer Science (FOCS'#2)}
  \newcommand{\bibsoda}[2]{Proceedings of the #1 Annual ACM-SIAM Symposium on
  Discrete Algorithms (SODA'#2)} \newcommand{\bibgd}[2]{Proceedings of the #1
  International Symposium on Graph Drawing and Network Visualization (GD'#2)}
  \newcommand{\bibinfovis}[1]{Proceedings of the IEEE Symposium on Information
  Visualization (InfoVis'#1)} \newcommand{\bibvis}[1]{Proceedings of the IEEE
  Conference on Visualization (Vis'#1)} \newcommand{\bibpvis}[1]{Proceedings of
  the IEEE Pacific Visualisation Symposium (PacificVis'#1)}
  \newcommand{\bibsoftvis}[2]{Proceedings of the #1 ACM Symposium on Software
  Visualization (SoftVis'#2)} \newcommand{\bibeurocg}[2]{Proceedings of the #1
  European Workshop on Computational Geometry (EuroCG'#2)}
  \newcommand{\bibsocg}[2]{Proceedings of the #1 Annual Symposium on
  Computational Geometry (SoCG'#2)} \newcommand{\bibwads}[2]{Proceedings of the
  #1 International Symposium on Algorithms and Data Structures (WADS'#2)}
  \newcommand{\bibwg}[2]{Proceedings of the #1 Workshop on Graph-Theoretic
  Concepts in Computer Science (WG'#2)} \newcommand{\bibgta}{Proceedings of the
  Conference at Graph Theory and Applications}
  \newcommand{\bibisaac}[2]{Proceedings of the #1 International Symposium on
  Algorithms and Computation (ISAAC'#2)} \newcommand{\bibcocoon}[2]{Proceedings
  of the #1 Annual International Conference on Computing and Combinatorics
  (COCOON'#2)} \newcommand{\bibtamc}[2]{Proceedings of the #1 Annual Conference
  on Theory and Applications of Models of Computation (TAMC'#2)}
  \newcommand{\bibicalp}[2]{Proceedings of the #1 International Colloquium on
  Automata, Languages and Programming (ICALP'#2)}
  \newcommand{\biblatin}[2]{Proceedings of the #1 Latin American Symposium
  (LATIN'#2)} \newcommand{\bibesa}[2]{Proceedings of the #1 Annual European
  Symposium on Algorithms (ESA'#2)}
\begin{thebibliography}{10}
\providecommand{\url}[1]{\texttt{#1}}
\providecommand{\urlprefix}{URL }
\providecommand{\doi}[1]{https://doi.org/#1}

\bibitem{DBLP:conf/soda/ChimaniGMW09}
Chimani, M., Gutwenger, C., Mutzel, P., Wolf, C.: {Inserting a Vertex into a
  Planar Graph}. In: \bibsoda{20th}{09}. pp. 375--383 (2009)

\bibitem{chimani_et_al:LIPIcs:2016:5922}
Chimani, M., Hlinen{\'y}, P.: {Inserting Multiple Edges into a Planar Graph}.
  In: Fekete, S., Lubiw, A. (eds.) \bibsocg{32nd}{16}. Leibniz International
  Proceedings in Informatics (LIPIcs), vol.~51, pp. 30:1--30:15. Schloss
  Dagstuhl--Leibniz-Zentrum fuer Informatik (2016).
  \doi{10.4230/LIPIcs.SoCG.2016.30}

\bibitem{DBLP:books/sp/CyganFKLMPPS15}
Cygan, M., Fomin, F.V., Kowalik, L., Lokshtanov, D., Marx, D., Pilipczuk, M.,
  Pilipczuk, M., Saurabh, S.: {Parameterized Algorithms}. Springer-Verlag
  Springer International Publishing (2015). \doi{10.1007/978-3-319-21275-3}

\bibitem{10.1007/978-3-319-21840-3_25}
Eades, P., Hong, S.H., Liotta, G., Katoh, N., Poon, S.H.: {Straight-Line
  Drawability of a Planar Graph Plus an Edge}. In: Dehne, F., Sack, J.R.,
  Stege, U. (eds.) \bibwads{14th}{15}. pp. 301--313 (2015).
  \doi{0.1007/978-3-319-21840-3\_25}

\bibitem{EILAMTZOREFF1998113}
Eilam-Tzoreff, T.: {The Disjoint Shortest Paths Problem}. Discrete Applied
  Mathematics  \textbf{85}(2),  113 -- 138 (1998).
  \doi{10.1016/S0166-218X(97)00121-2}

\bibitem{4567876}
Even, S., Itai, A., Shamir, A.: {On the complexity of time table and
  multi-commodity flow problems}. In: 16th Annual Symposium on Foundations of
  Computer Science (SFCS 1975). pp. 184--193 (Oct 1975).
  \doi{10.1109/SFCS.1975.21}

\bibitem{FORTUNE1980111}
Fortune, S., Hopcroft, J., Wyllie, J.: The directed subgraph homeomorphism
  problem. Theory of Computing Systems  \textbf{10}(2),  111 -- 121 (1980).
  \doi{https://doi.org/10.1016/0304-3975(80)90009-2}

\bibitem{garey1983crossing}
Garey, M.R., Johnson, D.S.: {Crossing Number is NP-Complete}. SIAM Journal on
  Algebraic and Discrete Methods  \textbf{4}(3),  312--316 (1983)

\bibitem{JGAA-160}
{Gutwenger}, C., {Klein}, K., {Mutzel}, P.: {Planarity Testing and Optimal Edge
  Insertion with Embedding Constraints}. Journal of Graph Algorithms and
  Applications  \textbf{12}(1),  73--95 (2008). \doi{10.7155/jgaa.00160}

\bibitem{DBLP:journals/algorithmica/GutwengerMW05}
Gutwenger, C., Mutzel, P., Weiskircher, R.: {Inserting an Edge into a Planar
  Graph}. Algorithmica  \textbf{41}(4),  289--308 (2005).
  \doi{10.1007/s00453-004-1128-8}

\bibitem{KOBAYASHI2010234}
Kobayashi, Y., Sommer, C.: {On Shortest Disjoint Paths in Planar Graphs}.
  Discrete Optimization  \textbf{7}(4),  234 -- 245 (2010).
  \doi{https://doi.org/10.1016/j.disopt.2010.05.002}

\bibitem{DBLP:journals/corr/abs-1201-3011}
Kobourov, S.G.: {Force-Directed Drawing Algorithms}. In: Tamassia, R. (ed.)
  {Handbook of Graph Drawing and Visualization}, pp. 383--408. Chapman and
  Hall/CRC (2013)

\bibitem{doi:10.1137/1.9781611975055.12}
Radermacher, M., Reichard, K., Rutter, I., Wagner, D.: {A Geometric Heuristic
  for Rectilinear Crossing Minimization}. In: Pagh, R., Venkatasubramanian, S.
  (eds.) \bibalenex{20th}{18}. pp. 129--138 (2018).
  \doi{10.1137/1.9781611975055.12}

\end{thebibliography}

\appendix
\newpage

\section{Proof of Theorem~\ref{theorem:degree_3}}

\thmdegthree*

\begin{proof}
	Let $p$ be a shortest path in $\ExtG$. Assume that $p$ is not consistent. Then
	there is a  vertex $v$  that left and right of $p$. Let $fg$ be the first edge
	of $p$ that crosses a primal edge incident to $v$.  If the degree of $v$ is at
	most $2$, 	 then $p$ contains either a loop or a double edge, contradicting
	the assumption that $p$ is a shortest path.  Therefore, assume that the degree
	of $v$ is $3$.  Without loss of generality, let $f, g$ and $h$ be the faces
	around $v$ in clockwise order (Fig.~\ref{fig:degree_3}b).  Since $v$ is left
	and right of $p$, $p$ contains either the edge $fh$ or $hg$. Thus, $p$
	contains either $f$ or $g$ twice. This contradicts the assumption that $p$ is
	a shortest path. 
\end{proof}

\begin{figure}[tb]
	\centering
	\includegraphics{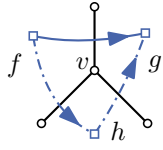}
	\caption{Inconsistent path around a
	degree-3 vertex.}
	\label{fig:degree_3}
\end{figure}

\end{document}